\newcommand{\mc}[1]{\mathcal{#1}}
\journalname{Journal of Mathematical Biology}
\begin{document}

\title{Global stability properties of renewal epidemic models}

%% use optional labels to link authors explicitly to addresses:
%% \author[label1,label2]{}
%% \address[label1]{}
%% \address[label2]{}
\author{Michael T. Meehan        \and
				Daniel G. Cocks \and
				Johannes M{\"u}ller \and
				Emma S. McBryde
}

%\authorrunning{Short form of author list} % if too long for running head

\institute{M. Meehan, E. McBryde \at
              Australian Institute of Tropical Health and Medicine, James Cook University \\
              \email{michael.meehan1@jcu.edu.au}           %  \\
%             \emph{Present address:} of F. Author  %  if needed
           \and
           D. Cocks \at
              Research School of Science and Engineering, Australian National University 
						\and
						J. M{\"u}ller \at
						Centre for Mathematical Sciences, Technische Universit{\"a}t M{\"u}nchen, and 
						Institute of Computational Biology, German Research Center for Environmental Health, M{\"u}nchen
}

%\author[Meehan,Cocks,McBryde]{Michael T. Meehan        \and
%				Daniel G. Cocks \and
%				Emma S. McBryde \and}

%\address[Meehan,McBryde]{Australian Institute of Tropical Health and Medicine, James Cook University, Townsville, Australia}
%\address[Cocks]{College of Science and Engineering, James Cook University, Townsville, Australia}

\date{Received: date / Accepted: date}
% The correct dates will be entered by the editor

\maketitle

\begin{abstract}
%% Text of abstract
We investigate the global dynamics of a general Kermack-McKendrick-type epidemic model formulated in terms of a system of renewal equations. Specifically, we consider a renewal model for which both the force of infection and the infected removal rates are arbitrary functions of the infection age, $\tau$, and use the direct Lyapunov method to establish the global asymptotic stability of the equilibrium solutions. In particular, we show that the basic reproduction number, $R_0$, represents a sharp threshold parameter such that for $R_0\leq 1$, the infection-free equilibrium is globally asymptotically stable; whereas the endemic equilibrium becomes globally asymptotically stable when $R_0 > 1$, i.e. when it exists. %This analysis generalizes a number of previous results derived for the global dynamics of epidemic models.

\keywords{global stability, Lyapunov, renewal, Kermack-McKendrick}

\end{abstract}

%% PACS codes here, in the form: \PACS code \sep code

%% MSC codes here, in the form: \MSC code \sep code
%% or \MSC[2008] code \sep code (2000 is the default)

%% \linenumbers

%% main text
\section{Introduction}
\label{sec:Introduction}

%The classic Kermack-McKendrick paper~\citep{kermack1927contribution} is a seminal contribution to the mathematical theory of epidemic modelling. Within, the authors formulate a general mathematical model for the trajectory of an epidemic in a closed population and derive several fundamental results including, but not limited to, the conditions for an epidemic outbreak and the final size equation. Importantly, the model makes minimal assumptions about the infectiousness of infected individuals or the rate at which they recover or are removed; rather, these quantities are treated as arbitrary functions of the age since infection, $\tau$. Consequently, the analysis and conclusions of the Kermack-McKendrick model encompass a wide class of epidemic models, including countless incarnations that have since appeared in the infectious diseases modelling literature (e.g. the SIR and SEIR models). Despite this, some authors continue to misconstrue the original model and analysis as being synonymous with the basic SIR model described by a system of ordinary differential equations (i.e. with constant rates of infection and recovery) and, regrettably, the significance of Kermack and McKendrick's achievement is often under-appreciated --- this point has been recently emphasized in~\citep{breda2012formulation}.

The classic Kermack-McKendrick paper~\citep{kermack1927contribution} is a seminal contribution to the mathematical theory of epidemic modelling. Within, the authors formulate a general epidemic model in which the infectiousness of infected individuals and the rate at which they recover or are removed is an arbitrary function of the infection age, $\tau$; from this, they derive several fundamental results including the conditions for an epidemic outbreak and the final size equation. As a consequence of their general formulation, the analysis and conclusions of the Kermack-McKendrick paper encompass a wide class of epidemic models, including countless incarnations that have since appeared in the infectious diseases modelling literature (e.g. the SIR and SEIR models).

In this article we revisit the classic Kermack-McKendrick model~\citep{kermack1927contribution} and further investigate the system properties and global dynamics in the presence of demographic influences. Our main result, which is derived in section~\ref{sec:stability}, is to show that the basic reproduction number $R_0$ represents a sharp threshold parameter that determines the global stability of the infection-free and endemic equilibria. Specifically, we find that when $R_0 \leq 1$ the infection-free equilibrium point is the unique equilibrium in the nonnegative orthant and is globally asymptotically stable within this region. Conversely, when $R_0 > 1$ an endemic solution emerges in the interior of this region which is globally asymptotically stable away from the invariant $S$-axis. Both of these results are proved by the direct Lyapunov method, that is, by identifying appropriate Lyapunov functionals.

Lyapunov functions have previously been used to establish the global asymptotic stability properties of SIR, SIS and SIRS models (see e.g.~\citep{Korobeinikov2004}) for which the population is either constant~\citep{Korobeinikov2002955,ORegan2010446} or varying~\citep{li1995global,Li1999191}. These results have also been extended to SEIR and SEIS models in~\citep{li1995global,Li1999191,fan2001global,McCluskey2008518}, and epidemic models with multiple parallel infectious stages~\citep{Korobeinikov2008} or strains~\citep{Bichara2013}. However, these results have each been established within the context of compartmental-type epidemic models for which the per-capita flow rates between the stages of infection are assumed to be constant and infectiousness is fixed for the duration of their infectious period.

Only recently, by using an approach that relied on both the direct Lyapunov method and semigroup theory, were~\citep{magal2010lyapunov} able to determine the global stability properties of equilibria in infection-age models. This work has since been expanded~\citep{McCluskey2008518,McCluskey201055,McCluskey2009603} and extended to models with general incidence functions~\citep{Huang2011,MCCLUSKEY20103106,CHEN201616,SOUFIANE20161211} and multiple parallel infectious stages~\citep{wang2012global} or strains~\citep{MARTCHEVA2013225}. Here, we provide an alternative treatment given in terms of the original renewal formulation of the Kermack-McKendrick model.

%Until recently, the global stability properties of the equilibria in infection-age epidemic models remained elusive, with only local stability having been established under the conditions outlined above (see e.g.~\citep{breda2012formulation}). However,~\citep{magal2010lyapunov} finally resolved this issue by applying semigroup theory in conjunction with the direct Lyapunov method to prove global stability applies in these models also. 

%The articles just mentioned each investigated the global stability properties of the Kermack-McKendrick model (or its extensions) when formulated as an infection-age model. Here, we present an alternative analysis given in terms of the renewal formulation. % (see section~\ref{sec:model}). 
In the next section we briefly describe the renewal system variables, parameters and their governing equations, and then discuss the system phase-space. Then, in section~\ref{sec:stability}, we derive the main result of this article where we introduce a set of Lyapunov functionals which we use to establish the global asymptotic stability of the infection-free and endemic equilibria.

\section{Model description}
\label{sec:model}
In the renewal formulation of the Kermack-McKendrick model we need only explicitly consider a class of susceptible (i.e. infection-na{\"i}ve) individuals, S, who each experience a time-dependent force of infection $F(t)$.\footnote{The dynamics of the class of infected individuals, I, is implicitly captured through the force of infection, $F(t)$. %In particular, it follows from the definition of the force of infection given in equation~\eqref{eq:F} that $F(t)$ can be written as a linear functional of the infectious cohort at time $t$, namely $i(t,\tau)$.
} By definition, the force of infection is the per-capita rate at which susceptibles become infected. Therefore the incidence at time $t$, $v(t)$, is given by
\begin{equation}
v(t) = F(t) S(t),\nonumber
\end{equation}
where $S(t)$ is the number of susceptibles and $v(t)$ describes the rate at which new infected individuals appear at time $t$. Assuming then that individuals who have been infected for $\tau$ units of time on average contribute an amount $A(\tau)$ to the force of infection, we find that the total force of infection at time $t$, $F(t)$, can be written in terms of a renewal equation:
\begin{align}
F(t) &= \int_0^\infty A(\tau) v(t-\tau)\,d\tau,\nonumber\\
&= \int_0^\infty A(\tau) F(t-\tau)S(t-\tau)\,d\tau.\nonumber
\end{align}
Here $v(t-\tau)$ represents the number of individuals who became infected at time $t-\tau$. 

In general, the infectivity kernel $A\geq 0$ is an arbitrary function of the infection age $\tau$ %however, for concreteness, we assume here that $A$ is a regular, integrable function with connected compact support and that $A(\tau) \geq 0$. 
%Furthermore, the definition of $A$ 
whose definition motivates us, in analogy with~\citep{magal2010lyapunov}, to define
\begin{equation}
\bar{\tau} = \mathrm{sup}\left\{\tau\geq 0\: :\: A(\tau) > 0\right\},\label{eq:taubar}
\end{equation}
the maximum infection age at which an individual can contribute to the force of infection. In this case we need only look back to this maximum infection age to calculate $F(t)$:
\begin{align}
F(t) &= \int_0^{\bar\tau}A(\tau) v(t-\tau)\,d\tau,\nonumber\\
&= \int_0^{\bar\tau} A(\tau) F(t-\tau)S(t-\tau)\,d\tau.\label{eq:F}
\end{align}

To complete the model description we assume that in addition to removal by infection, individuals are recruited into the susceptible class at a constant rate $\lambda$ and die naturally at a constant per-capita rate $\mu$. Combining these rates, we find that %the total rate of change of $S(t)$ is given by
\begin{equation}
\frac{dS(t)}{dt} = \lambda - \mu S(t) - F(t)S(t),\label{eq:dS}
\end{equation}
assuming that infection leads to permanent immunity. Given~\eqref{eq:dS}, it is straightforward to show that $S(t) > 0$ for $t > 0$ provided it is nonnegative initially.

An important parameter that governs the system trajectory is the basic reproduction number $R_0$, defined as the expected number of secondary cases caused by a single (typical) infectious individual in a fully susceptible population. Given the definition of $A(\tau)$ and the expression for $F(t)$ (equation~\eqref{eq:F}), the functional form of $R_0$ is naturally given by
\begin{equation}
R_0 = S^0\int_0^{\bar\tau} A(\tau)\,d\tau\label{eq:R0}
\end{equation}
where $S^0=\lambda/\mu$ is the steady-state susceptible population in the absence of infection (see below).

Before introducing suitable initial conditions for the system, we emphasize that in order to solve~\eqref{eq:F} and~\eqref{eq:dS} we must have knowledge of the entire past history of $F$ and $S$ over the interval $\tau \in [-\bar\tau, 0]$. Therefore, the state of our system $P = (\mc S,\mc F)$ belongs to an infinite-dimensional phase-space $\Omega$, which can appropriately be chosen as
\begin{equation}
\Omega = C^0_+([-\bar\tau, 0]) \times L^1_+(-\bar\tau, 0).\nonumber
\end{equation}
With this choice of state-space, standard arguments show that the model~\eqref{eq:F}-\eqref{eq:dS} is well defined. In particular, for $(\mc S, \mc F)\in\Omega$ we can construct an extension $(S, F)\in C^0_+([-\bar\tau, \infty)) \times L^1_{loc}(-\bar\tau, \infty)$ that solves the model equations. 

Given $\Omega$ then, a suitable choice of initial conditions is given by
\begin{equation}
\mc S_0 \in C^0_+([-\bar\tau,0]) \qquad \mbox{and} \qquad \mc F_0\in L^1_+(-\bar\tau,0).\nonumber
\end{equation}
By assuming that the initial histories $\mc S_0$ and $\mc F_0$ are bounded, combined with moderate restrictions on the infectivity kernel $A$, we guarantee that extensions of the initial conditions generated by the model equations~\eqref{eq:F} and~\eqref{eq:dS} remain bounded and continuous. 

Next, the trajectory is given by  $(\mc S_t(\cdot), \mc F_t(\cdot))\in\Omega$ where
\begin{equation}
\mc S_t(s) = S(t+s),\quad \mc F_t(s) = F(t+s),\qquad s\in[-\bar\tau, 0].\nonumber
\end{equation}
Using~\eqref{eq:F} and~\eqref{eq:dS} we can show that $(\mc S_t, \mc F_t)\in C^1([-\bar\tau,0])\times W^{1,1}(-\bar\tau,0)$ for $t > \bar\tau$, such that we have compactness for $t > \bar\tau$. Moreover, since the trajectory is bounded, the $\omega$-limit set of the system~\eqref{eq:F}-\eqref{eq:dS} is non-empty.

We point out that in this notation, the pair $(\mc S_t(0), \mc F_t(0)) = (S(t), F(t))$ represents the most recent value in the history of a state along the system trajectory at time $t$, namely $(\mc S_t(\cdot), \mc F_t(\cdot))\in\Omega$. In this case the model equations~\eqref{eq:F}-\eqref{eq:dS} can be understood as rules for updating the most recent values of the histories of $F$ and $S$ respectively. 

%As a special case, if we treat any state $P=(S,F)\in\Omega$ as an initial condition we can use the notation $(S(0), F(0))$ to represent the most recent value of the history of $P$. Moreover, the model equations~\eqref{eq:F}-\eqref{eq:dS} can be understood as rules for updating the most recent values of the histories $F$ and $S$ respectively. 

%where we choose to work with the state space $L^1_+$ to accommodate discontinuities in the incidence function $v(t)$, and, in turn, $F(t)$ (particularly those that may arise from an incongruous choice of initial conditions).\footnote{For general choices of the initial susceptible population $S(0)$ the value of $v(t)$ for $t \rightarrow 0^+$ may not equal the value of $\lim_{t\rightarrow 0^-}v(0)$ prescribed by the history function $v_{-\bar\tau}(\sigma)$, leading to a jump discontinuity.}

Of particular interest within the larger, forward invariant phase-space $\Omega$, is the interior region $\widehat{\Omega}\subset\Omega$ for which new infections will arise either at the present time or at some time in the future. That is,
\begin{equation}
\widehat{\Omega} = \left\{(\mc S, \mc F) \in \Omega \: : \: \exists \, a\in [0,\bar\tau] \: \:\mbox{s.t.} \: \int_{0}^{\bar\tau} A(\tau + a) \mc F(-\tau) \mc S(-\tau)\,ds > 0 \right\}.\nonumber
\end{equation}

%Of particular interest within the larger, forward invariant phase-space $\Omega$, is the region $\widehat{\Omega}\subset\Omega$ for which new infections will arise either at the present time or at some time in the future. We then use the definition of $\bar\tau$ (equation~\eqref{eq:taubar}) to isolate the set of states $F(-\tau)$ such that the support of $A(\tau)F(-\tau)$ intersects the interval $[\bar{\tau}, 0]$:
%\begin{equation}
%\mathcal{F} = \left\{F \in L^1_+ \: : \: \exists a\in[-\bar\tau, 0] \quad \mbox{s.t.} \quad \int_0^{\bar\tau} A(\tau) F(s - \tau)\,d\tau > 0 \right\}.\nonumber
%\mathcal{F} = \left\{F \in L^1_+ \: : \: \int_{0}^{\bar\tau} A(\tau)F(-\tau)\,d\tau > 0\right\}.\nonumber
%\end{equation}
%From this, we can now define the interior region, $\widehat{\Omega} \subset \Omega$, within which we expect new infections to appear:
%\begin{equation}
%\widehat{\Omega} =  C^0_+([-\bar\tau, 0]) \times \mathcal{F}.\nonumber
%\end{equation}
Conversely, we can also segregate the boundary, $\partial\Omega$, of the phase-space, for which no new infections can arise and for which the infection will be eradicated %(after at most time $\bar\tau$)
%(i.e. the region in which the infection has been eradicated from the population)% $v(t) = 0$ for almost every $t \in \mathbb{R}$. Using the definition of $\widehat{\Omega}$ we then have 
\begin{equation}
\partial\Omega = \Omega \setminus \widehat{\Omega}.\nonumber
\end{equation}
%When the system resides in $\partial \Omega$ the infection has been eradicated from the population.

Finally, it is easy to verify that the fixed states of the system~\eqref{eq:F}-\eqref{eq:dS} are given by
\begin{align}
P_0 &= (\mc S^0, \mc F^0) = (S^0, F^0) = \left(\frac{\lambda}{\mu},0\right), \nonumber\\
\bar{P} &= (\bar{\mc S},\bar{\mc F}) = (\bar S, \bar F) = \left(\frac{\lambda}{\mu R_0},\mu(R_0 - 1)\right).
\end{align}
Importantly, we see that the endemic equilibrium point, $\bar{P}$, only exists in the interior region $\widehat{\Omega}$ for $R_0 >1$; for the limiting case $R_0 = 1$, the endemic and infection-free equilibria coincide.

%Before continuing, we point out that by linearizing the system~\eqref{eq:F}-\eqref{eq:dS} about $P_0$, we find that the incidence $v$ is bounded away from zero whenever $R_0 > 1$. Therefore, if we consider trajectories that originate in $\widehat\Omega$ and follow the dynamics for time $\bar\tau$, we remain in $\widehat\Omega$. In the analysis that follows, we are motivated then to consider the subspace $\widetilde\Omega\subset \widehat\Omega$, where $\widetilde\Omega$ is the set of all trajectories starting in $\widehat\Omega$ having waited a time $\bar\tau$.

Ultimately, our goal will be to establish that i) when $R_0 \leq 1$ all system trajectories of~\eqref{eq:F}-\eqref{eq:dS} within $\Omega$ asymptotically approach the infection-free equilibrium point $P_0 \in \partial\Omega$ and ii) when $R_0 > 1$ trajectories that originate in $\Omega$ asymptotically approach the endemic equilibrium $\bar{P} \in \widehat{\Omega}$, except those that originate in $\partial\Omega$ which approach $P_0$.

\section{Global stability analysis}
\label{sec:stability}

% In this section we establish the global stability properties of the infection-free and endemic equilibrium solutions of the system~\eqref{eq:F}-\eqref{eq:dS}.

\subsection{Infection-free equilibrium}
\begin{theorem}
\label{the:ife}
Let $A\in C^1$. The infection-free equilibrium point $P_0$ of the system~\eqref{eq:F}-\eqref{eq:dS} is globally asymptotically stable in $\Omega$ for $R_0 \leq 1$. However, if $R_0 > 1$, solutions of~\eqref{eq:F}-\eqref{eq:dS} starting sufficiently close to $P_0$ in $\Omega$ move away from $P_0$, except those starting within the boundary region $\partial \Omega$ which approach $P_0$.
\end{theorem}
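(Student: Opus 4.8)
The plan is to treat both regimes with a single Lyapunov functional, analogous to the infection-age functionals used by \citep{magal2010lyapunov}, built from the Volterra function $g(x)=x-1-\ln x$ (which is nonnegative and vanishes only at $x=1$) together with a weighted integral over the relevant slice of the infection history. Concretely, I would take
\begin{equation}
V(t)=S(t)-S^0-S^0\ln\frac{S(t)}{S^0}+\frac{S^0}{R_0}\int_0^{\bar\tau}\left(\int_\tau^{\bar\tau}A(\sigma)\,d\sigma\right)F(t-\tau)S(t-\tau)\,d\tau.\nonumber
\end{equation}
Since $A\ge 0$, $F\ge 0$ and $S>0$ along trajectories, $V\ge 0$, with $V=0$ only when $S=S^0$ and the history integral vanishes. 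Writing $k(\tau)=\int_\tau^{\bar\tau}A(\sigma)\,d\sigma$, so that $k'=-A$, $k(\bar\tau)=0$ and, by \eqref{eq:R0}, $S^0k(0)/R_0=1$, a change of variables and one integration by parts turn the time derivative of the history integral into $k(0)F(t)S(t)-F(t)$; combining with \eqref{eq:dS} and $\lambda=\mu S^0$, the $FS$ contributions cancel and one is left with
\begin{equation}
\dot V=-\frac{\mu\,(S-S^0)^2}{S}+\frac{S^0(R_0-1)}{R_0}\,F(t).\nonumber
\end{equation}
The hypothesis $A\in C^1$ and the solution regularity established in Section~\ref{sec:model} justify this computation for $t$ large.

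For $R_0\le 1$ the second term is nonpositive, hence $\dot V\le 0$ throughout $\Omega$. I would then apply LaSalle's invariance principle --- admissible because, as noted in Section~\ref{sec:model}, trajectories are precompact for $t>\bar\tau$ and $\omega$-limit sets are nonempty --- and identify the largest invariant subset of $\{\dot V=0\}$. On that set $S(t)\equiv S^0$; since the points of an invariant set are genuine solutions of \eqref{eq:F}--\eqref{eq:dS}, equation \eqref{eq:dS} then forces $F(t)\equiv 0$, so the set reduces to $\{P_0\}$ and $P_0$ is globally asymptotically stable in $\Omega$.

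For $R_0>1$ the coefficient $S^0(R_0-1)/R_0$ is strictly positive, so along any trajectory with $F\not\equiv 0$ one has $\dot V>0$ whenever $S$ is near $S^0$; together with $V\ge 0$ and $V(P_0)=0$ this yields, by a Chetaev-type argument, that such trajectories cannot approach $P_0$. To make this quantitative for trajectories starting in $\widehat\Omega$, I would argue by contradiction: if $(S(t),F(t))\to P_0$, then eventually $S(t)\ge S^0-\delta$ with $(S^0-\delta)\int_0^{\bar\tau}A(\sigma)\,d\sigma>1$ (possible since $R_0>1$), and \eqref{eq:F} then bounds $F$ below by the solution of a renewal inequality with net reproduction number exceeding one; since the definition of $\widehat\Omega$ prevents the incidence $v=\mc F\,\mc S$ from being identically zero, a standard renewal/Gronwall comparison forces $F$ to stay bounded away from $0$, contradicting $F\to 0$. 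Finally, a trajectory started in $\partial\Omega=\Omega\setminus\widehat\Omega$ generates no new infections, so $v(t)\equiv 0$ and, $S$ being positive, $F(t)\equiv 0$ from some time onward; then \eqref{eq:dS} reduces to $\dot S=\lambda-\mu S$ and $S(t)\to S^0$, i.e.\ the trajectory approaches $P_0$.

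I expect the main obstacle to lie in the $R_0>1$ case rather than in the computation of $\dot V$: converting the local positivity of $\dot V$ near $P_0$ into a rigorous repulsion statement in the infinite-dimensional phase space $\Omega$, and cleanly separating the behaviour on $\widehat\Omega$ from that on the invariant boundary $\partial\Omega$. A secondary technical point, for $R_0\le 1$, is to verify the hypotheses needed to run LaSalle's principle in $\Omega$ --- continuity and boundedness of $V$ along trajectories, and the fact that reducing the largest invariant set in $\{\dot V=0\}$ to $\{P_0\}$ genuinely uses that limit-set points solve the full system, not merely that $S\equiv S^0$.
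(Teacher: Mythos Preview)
Your approach is essentially the paper's: the same Lyapunov functional built from $g(S/S^0)$ plus a history integral weighted by $k(\tau)=\int_\tau^{\bar\tau}A(\sigma)\,d\sigma$, the same Leibniz-rule computation of its derivative, the same LaSalle reduction for $R_0\le 1$, and the same $\dot V>0$-near-$P_0$ observation for $R_0>1$. The only cosmetic difference is your prefactor $S^0/R_0$ on the integral term (the paper uses~$1$), which makes the $FS$ contributions cancel and leaves a residual $\frac{S^0(R_0-1)}{R_0}F$, whereas in the paper the $F$ contributions cancel and the residual is $(R_0-1)\,FS/S^0$; both are nonpositive for $R_0\le1$ and force the same invariant-set analysis.
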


\begin{proof}[Proof of Theorem~\ref{the:ife}]
To verify theorem~\ref{the:ife} we first point out that the model equations~\eqref{eq:F}-\eqref{eq:dS} induce a continuous semiflow $\Phi_t  :  \Omega \rightarrow \Omega$. Hence, %since~\eqref{eq:dS} implies $S(t) > 0$ for all $t > 0$, 
if we define $D = \Phi_{\bar\tau}(\Omega)$, from~\eqref{eq:dS} we have $\mc S(0) > 0$ for $(\mc S,\mc F)\in D$. Note, $D$ is closed and forward invariant, and any trajectory originating in $\Omega$ enters $D$ either at, or before $t = \bar\tau$. 

Consider the Lyapunov functional $U :  D\rightarrow \mathbb{R}_+$ defined by
\begin{equation}
U(\mc S,\mc F) = g\left(\frac{\mc S(0)}{S^0}\right) + \int_0^{\bar\tau} \eta(\tau)\mc F(-\tau)\mc S(-\tau)\,d\tau\nonumber
\end{equation}
%\begin{equation}
%U(t) = g\left(\frac{S(t)}{S^0}\right) + \int_0^{\bar{\tau}}  \eta(\tau) v(t-\tau)\,d\tau \label{eq:U}
%\end{equation}
where
\begin{equation}
g(x) = x - 1 - \log x \qquad \mbox{and} \qquad \eta(\tau) = \int_\tau^{\bar\tau } A(s)\,ds.\label{eq:getadef}
\end{equation}
In particular we have $\eta(\bar\tau)= 0$, % and
\begin{equation}
\eta(0) = \frac{R_0}{S^0} \qquad \mbox{and} \qquad \eta'(\tau) = - A(\tau) \label{eq:etacons}
\end{equation}
where a $'$ denotes differentiation with respect to $\tau$. Importantly, the functional $U(\mc S,\mc F) \geq 0$ is well defined since $\mc S(0) > 0$, and has a global minimum at the infection-free equilibrium $P_0$.

%Note that $U(\mc S,\mc F)$ is well defined since $\mc S(0) > 0$. 
Next, let $(\mc S_t(\cdot),\mc F_t(\cdot))$ be a trajectory of the model~\eqref{eq:F}-\eqref{eq:dS} with initial condition in $D$. With $\mc S_t(s) = S(t + s)$ and $\mc F_t(s) = F(t+s)$ we may write
\begin{align}
U(\mc S_t(\cdot),\mc F_t(\cdot)) &= g\left(\frac{\mc S_t(0)}{S^0}\right) + \int_0^{\bar\tau} \eta(\tau) \mc F_t(-\tau) \mc S_t(-\tau)\,d\tau,\nonumber\\
&=g\left(\frac{S(t)}{S^0}\right) + \int_0^{\bar\tau} \eta(\tau) F(t-\tau)S(t-\tau)\,d\tau.\nonumber
\end{align}
In order to compute the time derivative of $U(\mc S_t, \mc F_t)$ we rewrite this as
\begin{equation}
U(\mc S_t(\cdot), \mc F_t(\cdot)) = g\left(\frac{S(t)}{S^0}\right) + \int_{t-\bar\tau}^{t} \eta(t-s) F(s)S(s)\,ds.\label{eq:U}
\end{equation}

%%%%%%%%%%%
%First, since $v\in L^1$ and only has a derivative in the generalized sense, we consider the derivative $U(t)$ with respect to time, $dU(t)/dt = H(S,v)$ on the dense space $\Omega^1$ defined by
%\begin{equation}
%\Omega^1 = \mathbb{R}_{\geq 0}\times(L^1_+(t-\bar\tau,t)\cap W^{1,1}(t-\bar\tau,t)),
%\end{equation}
%where $W^{1,1}$ denotes the corresponding Sobolev-space, such that the $v$-component of all points in $\Omega^1$ have a weak derivative in $L^1$. This gives
%\begin{equation}
%H(S,v) = \left(\frac{1}{S^0} - \frac{1}{S(t)}\right)\frac{dS(t)}{dt} + \int_0^{\bar\tau }{\eta(\tau) \frac{dv(t-\tau)}{dt}}\,d\tau,\label{eq:Udot}
%\end{equation}
%where $dv(t-\tau)/dt$ is the weak derivative of $v(t-\tau)$.

%Differentiating $U(t)$ with respect to time gives
%\begin{equation}
%\frac{dU(t)}{dt} %&= \frac{dU_1(t)}{dt} + \frac{dU_2(t)}{dt},\nonumber\\
%= \left(\frac{1}{S^0} - \frac{1}{S(t)}\right)\frac{dS(t)}{dt} + \int_0^{\bar\tau }{\eta(\tau) \frac{dv(t-\tau)}{dt}}\,d\tau.\label{eq:Udot}
%\end{equation}

Differentiating each term in~\eqref{eq:U} along system trajectories separately, we first have
\begin{align}
\frac{d}{dt}\left[g\left(\frac{S(t)}{S^0}\right)\right] &= \left(\frac{1}{S^0} - \frac{1}{S(t)}\right)\frac{dS(t)}{dt},\nonumber\\
&=  \frac{\lambda}{S^0} - \mu\,\frac{S(t)}{S^0} - F(t)\,\frac{S(t)}{S^0} - \frac{\lambda}{S(t)} + \mu  + F(t),\nonumber\\
&= \mu \left(2 - \frac{S(t)}{S^0} - \frac{S^0}{S(t)}\right) - F(t)\,\frac{S(t)}{S^0} + F(t),\nonumber\\
&= -\mu\,\frac{S(t)}{S^0}\left(1 - \frac{S^0}{S(t)}\right)^2 - F(t)\left(\frac{S(t)}{S^0} - 1\right)\label{eq:U1dot}
\end{align}
where in the second line we have substituted in the identity $\lambda = \mu S^0$.

%Next, to simplify the second term in~\eqref{eq:Udot} we use the identity
%\begin{equation}
%\frac{dv(t-\tau)}{dt} = -\frac{dv(t-\tau)}{d\tau},\label{eq:dtdtau}
%\end{equation}
%and integrate by parts to obtain
%\begin{align}
%\int_0^{\bar\tau }{\eta(\tau) \frac{dv(t-\tau)}{dt}}\,d\tau &= -\int_0^{\bar\tau }{\eta(\tau) \frac{dv(t-\tau)}{d\tau}}\,d\tau \nonumber\\
%&= -\eta(\tau)v(t-\tau)\left.\right|_0^{\bar\tau } + \int_0^{\bar\tau }{\eta'(\tau)v(t-\tau)}\,d\tau,\nonumber\\
%&= \frac{R_0}{S^0}v(t) - \int_0^{\bar\tau } A(\tau)v(t-\tau)\,d\tau,\nonumber\\
%&= F(t)\left(R_0\,\frac{S(t)}{S^0} - 1\right)\label{eq:U2dot}
%\end{align}
%where in the third line we have substituted in~\eqref{eq:etacons} and in the last line we have used the definition of $F(t)$, equation~\eqref{eq:F}.

Next we differentiate the second term in~\eqref{eq:U} to obtain
\begin{align}
\frac{d}{dt}\left(\int_{t-\bar\tau}^t \eta(t-s)F(s)S(s)\,ds\right) &= \eta(0)F(t)S(t) - \eta(\bar\tau)F(t-\tau)S(t-\tau) \nonumber\\
&\qquad + \int_{t-\bar\tau}^t \frac{d\eta(t-s)}{dt}F(s)S(s)\,ds,\nonumber\\
&=R_0 F(t)\,\frac{S(t)}{S^0} - \int_{t-\bar\tau}^t A(t-s)F(s)S(s)\,ds,\nonumber\\
&= R_0 F(t)\,\frac{S(t)}{S^0}  - F(t)\label{eq:U2dot}
\end{align}
where in the second line we have substituted in~\eqref{eq:etacons} and in the last line we have used the definition of $F(t)$, equation~\eqref{eq:F}.
%To verify that $v(t-\tau)$ is differentiable, at least in the weak sense, we first consider a dense subset (w.r.t the product topology $\Omega$), namely,
%\begin{equation}
%\Omega^1 = (L^1_+(0,\bar\tau)\cap W^{1,1}(0,\bar\tau))\times\mathbb{R}_{\geq 0}.
%\end{equation}
%Here, $W^{1,1}$ denotes the corresponding Sobolev-space, such that the $v$-component of all points in $\Omega^1$ have a weak derivative in $L^1$. Then, equation~\eqref{eq:dtdtau} (and (16), see below) hold true in the weak sense, and you are done on $\Omega^1$. Next, since for $(s,V)\in\Omega^1$,
%\begin{equation}
%\frac d{dt} U = H(S,v)
%\end{equation}
%where $H$ can be continuously extended to $\Omega$, also $ \frac d{dt} U$ can be continuously extended to $\Omega$ (together with the estimations), and you are done.

Finally, combining~\eqref{eq:U1dot} and~\eqref{eq:U2dot} yields
\begin{equation}
\frac{d}{dt}U(\mc S_t, \mc F_t) = -\mu\,\frac{\mc S_t(0)}{S^0}\left(1 - \frac{S^0}{\mc S_t(0)}\right)^2 -\left(1 - R_0\right)\mc F_t(0)\,\frac{\mc S_t(0)}{S^0}.\label{eq:Udotfinal}
%\frac{dU(t)}{dt} 
%H(S,v) = -\mu\,\frac{S(t)}{S^0}\left(1 - \frac{S^0}{S(t)}\right)^2 -\left(1 - R_0\right)F(t)\,\frac{S(t)}{S^0}.\label{eq:Udotfinal}
%&\leq 0 \qquad (\mbox{provided } R_0\leq 1).\nonumber
\end{equation}
%We note that $H(S,v)$ can be continuously extended to $\Omega$ and hence, the weak derivative of $dU/dt$ for all $\Omega$ is also given by~\eqref{eq:Udotfinal}.
We emphasize that we know for a trajectory $(\mc S_t,\mc F_t)\in D \subset \Omega$, that for $t > 0$ we already have $\mc F_t\in C^0([-\bar\tau,0])$, such that this expression is well defined and $U$ is a proper Lyapunov functional on the closed domain $D$.

Importantly, for $R_0 \leq 1$ we have $dU/dt \leq 0$. The derivative $\dot{U}(t) = 0$ if and only if $\mc S_t(0) = S^0$ and either (a) $R_0 = 1$ or (b) $\mc{F}_t(0) = 0$. Therefore, the largest invariant subset in $\Omega$ for which $\dot{U} = 0$ is the singleton $\left\{P_0\right\}$. 
As $A(\tau)$ is smooth, the orbit is eventually precompact. Hence, by the infinite-dimensional form of LaSalle's extension of Lyapunov's global asymptotic stability theorem~\citep[Theorem~5.17]{hsmith_textbook}, the infection-free equilibrium point $P_0$ is globally asymptotically stable in $\Omega$ for $R_0 \leq 1$.

Conversely, if $R_0 > 1$ and $\mc F_t(0) > 0$, the derivative $\dot{U} > 0$ if $S(t)$ is sufficiently close to $S^0$. % (i.e. if the first term in~\eqref{eq:Udotfinal} is negligible). 
Therefore, solutions starting sufficiently close to the infection-free equilibrium point $P_0$ leave a neighbourhood of $P_0$, except those starting in $\partial\Omega$. Since $\dot{U} \leq 0$ for solutions starting in $\partial\Omega$ these solutions approach $P_0$ through this subspace.

\end{proof}

\subsection{Endemic equilibrium}

\begin{theorem}
\label{the:endemic} Let $A\in C^1$. If $R_0 > 1$ the endemic equilibrium point $\bar{P}$ is globally asymptotically stable in $\widehat{\Omega}$ (i.e. away from the boundary region $\partial \Omega$).
\end{theorem}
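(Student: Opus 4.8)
The plan is to imitate the proof of Theorem~\ref{the:ife}, but now with a Volterra-type Lyapunov functional centred on the endemic equilibrium $\bar P$, again built from $g(x)=x-1-\log x$. As before I would first restrict to a regular, forward-invariant subset: set $\widehat D=\Phi_{\bar\tau}(\widehat\Omega)\subset\widehat\Omega$, so that $\mc S(0)>0$ and the recent history of the incidence $\mc F(-\tau)\mc S(-\tau)$ is continuous along the trajectory. On $\widehat D$, consider
\[
V(\mc S,\mc F)=g\!\left(\frac{\mc S(0)}{\bar S}\right)+\bar F\bar S\int_0^{\bar\tau}\eta(\tau)\,g\!\left(\frac{\mc F(-\tau)\,\mc S(-\tau)}{\bar F\bar S}\right)d\tau,
\]
with $\eta$ as in~\eqref{eq:getadef}; here $\bar F\bar S$ is the equilibrium incidence, and the defining relations of $\bar P$ read $\lambda=\mu\bar S+\bar F\bar S$ and $\bar S\int_0^{\bar\tau}A(\tau)\,d\tau=1$ (equivalently $\eta(0)=R_0/S^0=1/\bar S$). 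The functional is nonnegative, vanishes at $\bar P$, and is well defined as long as the incidence stays positive and bounded on $[-\bar\tau,0]$.

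Differentiating $V$ along a trajectory is structurally the same computation as in~\eqref{eq:U1dot}--\eqref{eq:U2dot}. The $\mc S(0)$-term, using $\dot S=\lambda-\mu S-FS$ together with $\lambda=\mu\bar S+\bar F\bar S$, produces a negative term $-\mu(S-\bar S)^2/(\bar S S)$ plus cross terms in $F$ and $FS$; the integral term, treated by the same integration-by-parts using $\eta'=-A$, $\eta(\bar\tau)=0$, $\eta(0)=1/\bar S$ and the renewal identity $\int_0^{\bar\tau}A(\tau)F(t-\tau)S(t-\tau)\,d\tau=F(t)$, produces the boundary contribution, a term $-F$, and a weighted logarithmic integral. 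The $F$ and $FS/\bar S$ cross terms cancel exactly, and on writing $\nu(d\tau)=\bar S A(\tau)\,d\tau$ (a probability measure on $[0,\bar\tau]$ by the second defining relation) and $\psi(\tau)=F(t-\tau)S(t-\tau)/(\bar F\bar S)$, with $\int_0^{\bar\tau}\psi\,d\nu=F(t)/\bar F$ (the renewal equation once more), what is left is
\[
\frac{d}{dt}V(\mc S_t,\mc F_t)=-\mu\,\frac{\mc S_t(0)}{\bar S}\left(1-\frac{\bar S}{\mc S_t(0)}\right)^{2}-\bar F\,g\!\left(\frac{\bar S}{\mc S_t(0)}\right)-\bar F\left[\log\!\int_0^{\bar\tau}\!\psi\,d\nu-\int_0^{\bar\tau}\!\log\psi\,d\nu\right].
\]
The first two terms are manifestly $\le 0$, and the bracket is $\ge 0$ by Jensen's inequality (concavity of $\log$), so $\dot V\le 0$ on $\widehat D$. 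Equality forces $\mc S_t(0)=\bar S$ and $\psi$ constant on $\mathrm{supp}\,A$, which on an invariant set forces $F\equiv\bar F$ as well; hence the largest invariant subset of $\widehat\Omega$ on which $\dot V=0$ is $\{\bar P\}$. Since $A\in C^1$ gives eventual precompactness of orbits (as in Theorem~\ref{the:ife}), the infinite-dimensional LaSalle invariance principle~\citep[Theorem~5.17]{hsmith_textbook} then gives global asymptotic stability of $\bar P$ in $\widehat\Omega$.

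The main obstacle is not this calculation but making the argument well posed: $V$ contains $\log\!\big(\mc F(-\tau)\mc S(-\tau)/(\bar F\bar S)\big)$, so it can serve as a Lyapunov functional only if the incidence — equivalently the force of infection $F$ — is bounded away from zero along trajectories in $\widehat\Omega$, in particular on their $\omega$-limit sets. I would therefore first establish uniform persistence when $R_0>1$: there is an $\varepsilon>0$ with $\liminf_{t\to\infty}F(t)\ge\varepsilon$ for every trajectory starting in $\widehat\Omega$. This follows from the instability of $P_0$ proved in Theorem~\ref{the:ife} together with standard persistence theory for semiflows — $\{P_0\}$ is an isolated, acyclic invariant set in the boundary $\partial\Omega$, and (by the instability statement of Theorem~\ref{the:ife}) no $\omega$-limit set of a point of $\widehat\Omega$ can contain $P_0$ — and it also guarantees that the $\omega$-limit sets in question lie in the region where $V$ is smooth, so that LaSalle applies. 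Once this persistence bound is in hand, the integration-by-parts and the Jensen estimate are routine, and the proof closes exactly as for Theorem~\ref{the:ife}.
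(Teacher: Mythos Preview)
Your proposal is correct and is essentially the paper's own argument: the functional $V$ coincides with the paper's $W$ (since $\chi(\tau)=\bar F\bar S\,\eta(\tau)$), and both proofs proceed via the same integration by parts, Jensen's inequality for the concave logarithm, and LaSalle's invariance principle. Your exact three-term expression for $\dot V$ is in fact a slight sharpening of the paper's single-term upper bound (the paper absorbs your $-\bar F\,g(\bar S/\mc S_t(0))$ and the Jensen bracket via the estimate $\log x\ge 1-1/x$), and you are more careful than the paper about the uniform-persistence step needed to keep the logarithm well defined, which the paper simply asserts by appeal to Theorem~\ref{the:ife}.
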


\begin{proof}[Proof of Theorem~\ref{the:endemic}]
First, in theorem~\ref{the:ife} we observed that $F(t)$ for $t > 0$ is bounded away from zero when $R_0 > 1$, such that that for $R_0 > 1$ the semiflow $\Phi_t : \widehat\Omega \rightarrow \widehat\Omega$. Therefore, in analogy with theorem~\ref{the:ife} we define $\widehat{D} = \Phi_{\bar\tau}(\widehat\Omega)$ which is a closed, forward-invariant set for $R_0 > 1$. Moreover $\mc S, \mc F > 0$ for $(\mc S,\mc F)\in\widehat{D}$ and any trajectory originating in $\widehat\Omega$ enters $\widehat{D}$ at the latest at time $t = \bar\tau$, provided $R_0 > 1$.

In this case we define $W\, : \,  \widehat{D}\rightarrow \mathbb{R}_+$
\begin{equation}
W(\mc S,\mc F) = g\left(\frac{\mc S(0)}{\bar{S}}\right) + \int_0^{\bar\tau }{\chi(\tau)g\left(\frac{\mc F(-\tau)\mc S(-\tau)}{\bar{F}\bar{S}}\right)}\,d\tau\nonumber
\end{equation}
where $g(x)$ has been defined previously in~\eqref{eq:getadef} and
\begin{equation}
\chi(\tau) = \bar{F}\bar{S}\int_\tau^{\bar\tau } A(s)\,ds.\nonumber
\end{equation}
Immediately we have that $\chi(\bar\tau) = 0$,
\begin{equation}
\chi(0) = \bar{F} \qquad \mbox{and} \qquad \chi'(\tau) = -\bar{F}\bar{S}A(\tau).\nonumber
\end{equation}
Once again, note that $W(\mc S,\mc F)$ is well defined on $\widehat{D}$.

Similar to before, we let $(\mc S_t(\cdot),\mc F_t(\cdot))$ be a trajectory of the model with initial condition in $\widehat{D}$ and adopt the notation $\mc S_t(s) = S(t+s)$ and $\mc F_t(s) = F(t+s)$. We may then write
\begin{equation}
W(\mc S_t(\cdot),\mc F_t(\cdot)) = g\left(\frac{S(t)}{\bar{S}}\right) + \int_0^{\bar\tau} \chi(\tau)\,g\left(\frac{F(t-\tau)S(t-\tau)}{\bar{F}\bar{S}}\right)\,d\tau\nonumber
\end{equation}
which we at once rewrite as
\begin{equation}
W(\mc S_t(\cdot),\mc F_t(\cdot)) = g\left(\frac{S(t)}{\bar{S}}\right) + \int_{t-\bar\tau}^{t} \chi(t-s)\,g\left(\frac{F(s)S(s)}{\bar{F}\bar{S}}\right)\,d\tau.\label{eq:W}
\end{equation}

%Importantly, the functional $W(t) \geq 0$ and has a global minimum in $\Omega$ at the endemic equilibrium point $\bar{P}$. Moreover, following the arguments outlined in section~\ref{sec:model}, we observe that $W(t)$ is well-defined in the region $\widetilde\Omega$ . 

%Taking the time derivative of $W(t)$ gives
%As before, we consider the time of derivative of $W(t)$ on an appropriate subspace of $\widetilde{\Omega}$ for which the weak derivative of $v$ exists, and then continuously extend this function to $\widetilde{\Omega}$. Omitting the overlapping details, we first get
%\begin{align}
%\frac{dW(t)}{dt} = \left(\frac{1}{\bar{S}} - \frac{1}{S(t)}\right)\frac{dS(t)}{dt} + \int_0^{\bar\tau }{\chi(\tau)\left[\frac{1}{\bar{v}}\frac{dv(t-\tau)}{dt} - \frac{d\log v(t-\tau)}{dt}\right]}\,d\tau.\label{eq:Wdot}
%\end{align}
Once again we differentiate each term separately. Beginning with the first term in~\eqref{eq:W} we have
\begin{align}
\frac{d}{dt}\left[g\left(\frac{S(t)}{\bar{S}}\right)\right] &=\left(\frac{1}{\bar{S}} - \frac{1}{S(t)}\right)\frac{dS(t)}{dt}, \nonumber\\
&= \frac{\lambda}{\bar{S}} - \mu\,\frac{S(t)}{\bar{S}} - F(t)\,\frac{S(t)}{\bar{S}} - \frac{\lambda}{S(t)} + \mu + F(t),\nonumber\\
& = \mu\left(2 - \frac{S(t)}{\bar{S}} - \frac{\bar{S}}{S(t)}\right) + \bar{F}\left(1 - \frac{\bar{S}}{S(t)}\right) + F(t)\left(1 - \frac{S(t)}{\bar{S}}\right),\nonumber\\
&= -\mu\,\frac{S(t)}{\bar{S}}\left(1 - \frac{\bar{S}}{S(t)}\right)^2 + \bar{F}\left(1 - \frac{\bar{S}}{S(t)}\right) + F(t)\left(1 - \frac{S(t)}{\bar{S}}\right) \label{eq:W1dot}
\end{align}
where in the third line we have substituted in the identity $\lambda = \mu\bar{S} + \bar{F}\bar{S}$. % (which can be derived from equation~\eqref{eq:dS} with $\dot{S} = 0$).

%Similar to before, the second term in~\eqref{eq:Wdot} can be simplified if we use~\eqref{eq:dtdtau} and the related identity
%\begin{equation}
%\frac{d\log v(t-\tau)}{dt} = - \frac{d\log v(t-\tau)}{d\tau}.
%\end{equation}
%where the arguments for the existence of this derivative are similar to those presented above.
Turning to the second term we find
\begin{align}
\frac{d}{dt}\left[\int_{t-\bar\tau} ^{t} \chi(t-s)\,g\left(\frac{F(s)S(s)}{\bar{F}\bar{S}}\right)\,d\tau\right] &= \chi(0)\,g\left(\frac{F(t)S(t)}{\bar{F}\bar{S}}\right) - 
\chi(\bar\tau)\,g\left(\frac{F(t-\bar\tau)S(t-\bar\tau)}{\bar{F}\bar{S}}\right)\nonumber\\
&\qquad + \int_{t-\bar\tau}^t \frac{d\chi(t-s)}{dt} g\left(\frac{F(s)S(s)}{\bar{F}\bar{S}}\right)\,ds,\nonumber\\
&= \bar{F}\,g\left(\frac{F(t)S(t)}{\bar{F}\bar{S}}\right) - \bar{F}\bar{S} \int_{t-\bar\tau}^t A(t-s) g\left(\frac{F(s)S(s)}{\bar{F}\bar{S}}\right)\,ds.\nonumber
\end{align}
Substituting in the definition $g(x) = x - 1 - \log x$ and equation~\eqref{eq:F} this expression becomes
\begin{align}
&\frac{d}{dt}\left[\int_{t-\bar\tau} ^{t} \chi(t-s)\,g\left(\frac{F(s)S(s)}{\bar{F}\bar{S}}\right)\,d\tau\right] \nonumber\\
&\qquad\qquad = F(t)\left(\frac{S(t)}{\bar{S}} - 1\right) - \bar{F}\left[\log\left(\frac{F(t)S(t)}{\bar{F}\bar{S}}\right) - \bar{S}\int_{t-\bar\tau}^t A(t-s)\log\left(\frac{F(s)S(s)}{\bar{F}\bar{S}}\right)\,ds   \right].
\end{align}
%We then have
%\begin{align}
%&\int_0^{\bar\tau }{\chi(\tau)\left[\frac{1}{\bar{v}}\frac{dv(t-\tau)}{dt} - \frac{d\log v(t-\tau)}{dt}\right]}\,d\tau\nonumber\\
%\qquad&= -\int_0^{\bar\tau }{\chi(\tau)\left[\frac{1}{\bar{v}}\frac{dv(t-\tau)}{d\tau} - \frac{d\log v(t-\tau)}{d\tau}\right]}\,d\tau\nonumber\\
%\qquad&= -\chi(\tau)\left[\frac{v(t-\tau)}{\bar{v}} - \log v(t-\tau)\right]_0^{\bar\tau } + \int_0^{\bar\tau }{ \chi'(\tau)\left[\frac{v(t-\tau)}{\bar{v}} - \log v(t-\tau)\right]}\,d\tau\nonumber\\
%\qquad&= \bar{F}\left[\frac{v(t)}{\bar{v}} - \log v(t)\right] - \bar{v}\int_0^{\bar\tau } {A(\tau)\left[\frac{v(t-\tau)}{\bar{v}} - \log v(t-\tau)\right]}\,d\tau\nonumber\\
%\qquad&= F(t)\left(\frac{S(t)}{\bar{S}} - 1\right) - \bar{F}\left[\log v(t) - \bar{S} \int_0^{\bar\tau }{A(\tau) \log v(t-\tau)}\,d\tau\right].\label{eq:W3dotexact}
%\end{align}

The final term in the square brackets can be bounded using Jensen's inequality\footnote{For a concave function $\varphi(\cdot)$ the following inequality holds~\citep{Jensen1906}:
\begin{equation}
\varphi\left(\int_0^\infty h(t)f(t) \,dt\right) \geq \int_0^\infty h(t)\varphi\left(f(t)\right)\,dt\nonumber
\end{equation}
where $h(t)$ is a normalized probability distribution.}:
\begin{align}
\bar{S}\int_{t-\bar\tau}^t A(t-s)\log\left(\frac{F(s)S(s)}{\bar{F}\bar{S}}\right)\,ds &\leq \log\left[\bar{S}\int_{t-\bar\tau}^t A(t-s)\frac{F(s)S(s)}{\bar{F}\bar{S}}\,ds\right],\nonumber\\
&= \log\left(\frac{F(t)}{\bar{F}}\right).\nonumber
%\log v(t) - \bar{S}\int_0^{\bar\tau } A(\tau)\log v(t-\tau)\,d\tau &\geq \log v(t) - \log\left(\bar{S}\int_0^{\bar\tau } A(\tau) v(t-\tau)\,d\tau\right),\nonumber\\
%&= \log\left(F(t)S(t)\right) - \log\left(F(t)\bar{S}\right),\nonumber\\
%&= \log\left(\frac{S(t)}{\bar{S}}\right),\nonumber\\
%&\geq 1 - \frac{\bar{S}}{S(t)}\label{eq:lowerbound}
\end{align}
 Importantly, we note that equality between the left- and right-hand sides occurs if and only if $F(t)S(t) = \bar{F}\bar{S}$. Substituting this result back into the expression above we find
\begin{align}
&\log\left(\frac{F(t)S(t)}{\bar{F}\bar{S}}\right) - \bar{S}\int_{t-\bar\tau}^t A(t-s)\log\left(\frac{F(s)S(s)}{\bar{F}\bar{S}}\right)\,ds\nonumber\\
&\quad \geq \log\left(\frac{F(t)S(t)}{\bar{F}\bar{S}}\right) - \log\left(\frac{F(t)}{\bar{F}}\right),\nonumber\\
&\quad = \log\left(\frac{S(t)}{\bar{S}}\right),\nonumber\\
&\quad \geq 1 - \frac{\bar{S}}{S(t)}\nonumber
\end{align}
where in the last line we have used $\log x \geq 1 - \frac{1}{x}$, where equality requires $S(t) = \bar{S}$. % and $v(t) = \bar{v}$ or, equivalently, if $S(t) = \bar{S}$ and $F(t) = \bar{F}$. 

This condition implies that
\begin{equation}
\frac{d}{dt}\left[\int_{t-\bar\tau} ^{t} \chi(t-s)\,g\left(\frac{F(s)S(s)}{\bar{F}\bar{S}}\right)\,d\tau\right] \leq F(t)\left(\frac{S(t)}{\bar{S}	} - 1\right) - \bar{F}\left(1 - \frac{\bar{S}}{S(t)}\right).\label{eq:W3dotbound}
\end{equation}

%Substituting this result back into~\eqref{eq:W3dotexact} we find
%\begin{equation}
%\frac{d}{dt}\int_0^{\bar\tau } {\chi(\tau) g\left(\frac{v(t-\tau)}{\bar{v}}\right)}\,d\tau \leq F(t)\left(\frac{S(t)}{\bar{S}} - 1\right)  -\bar{F}\left(1 - \frac{\bar{S}}{S(t)}\right).\label{eq:W3dotbound}
%\end{equation}
Finally, combining~\eqref{eq:W1dot} and~\eqref{eq:W3dotbound} yields
\begin{align}
\frac{d}{dt}W(\mc S_t,\mc F_t) &\leq -\mu\,\frac{\mc S_t(0)}{\bar{S}}\left(1 - \frac{\bar{S}}{\mc S_t(0)}\right)^2,\nonumber\\
&\leq 0.\label{eq:Wdotfinal}
\end{align}

From equation~\eqref{eq:Wdotfinal} we see that the largest invariant subset in $\widehat{\Omega}$ for which $\dot{W} = 0$ consists only of the endemic equilibrium point $\bar{P}$. 
As $A(\tau)$ is smooth, the orbit is eventually precompact.
Hence, by LaSalle's extension of Lyapunov's asymptotic stability theorem, the endemic equilibrium point $\bar{P}$ is globally asymptotically stable in $\widehat{\Omega}$.

%Finally, within the boundary region $\partial\Omega$ the system reduces to
%\begin{equation}
%\frac{dS(t)}{dt} = \lambda - \mu S(t).\nonumber
%\end{equation}
%Here it is straightforward to show that $\lim_{t\rightarrow\infty} S(t) = \lambda/\mu = S^0$ such that $P_0$ is globally asymptotically stable in $\partial\Omega$, independent of the value of $R_0$. 

\end{proof}

 \bibliographystyle{spbasic} 
\bibliography{References}

%% else use the following coding to input the bibitems directly in the
%% TeX file.

%%\begin{thebibliography}{00}

%% \bibitem{label}
%% Text of bibliographic item

%%\bibitem{}

%%\end{thebibliography}
\end{document}